\newtheorem{remark}{Remark}
\newcommand\blfootnote[1]{%
  \begingroup
  \renewcommand\thefootnote{}\footnote{#1}%
  \addtocounter{footnote}{-1}%
  \endgroup
}
\newcommand{\Rmnum}[1]{\expandafter\@slowromancap\romannumeral #1@}
\newtheorem{theorem}{Theorem}
\newtheorem{proposition}{Proposition}
\newtheorem{definition}{Definition}
\newtheorem{lemma}{Lemma}
\def\BState{\State\hskip-\ALG@thistlm}
\begin{document}

\title{
When to pull data from sensors for minimum Age of incorrect Information}


\author{ Saad Kriouile$^\dagger$}\author{Mohamad Assaad$^*$}
\affil{$^\dagger$ Huawei Technologies, France \\$^*$Laboratoire des Signaux et Syst\`emes, CentraleSup\'elec, Universit\'e Paris-Saclay, France}

\maketitle
\blfootnote{This work has been done before 14 October 2022 in CentraleSupelec.}
\newcommand{\HRule}{\rule{\linewidth}{0.5mm}}

\begin{abstract}
The age of Information (AoI) has been introduced to capture the notion of freshness in real-time monitoring applications. However, this metric falls short in many scenarios, especially when quantifying the mismatch between the current and the estimated states.
To circumvent this issue, in this paper, we adopt the age of incorrect information metric (AoII) that considers the quantified mismatch between the source and the knowledge at the destination while tracking the impact of freshness. We consider for that a problem where a central entity pulls the information from remote sources that evolve according to a Markovian Process. It selects at each time slot which sources should send their updates.
As the scheduler does not know the actual state of the remote sources, it estimates at each time the value of AoII based on the Markovian sources' parameters.
Its goal is to keep the time average of the AoII function as small as possible.
For that purpose, we develop a scheduling scheme based on Whittle's index policy. 
To that extent, we use the Lagrangian Relaxation Approach and establish that the dual problem has an optimal threshold policy. Building on that, we compute the expressions of Whittle's indices. Finally, we provide some numerical results to highlight the performance of our derived policy compared to the classical AoI metric.

\end{abstract}

\section{Introduction}

The number of connected devices has witnessed considerable growth in the last decade due to the emergence of IoT. This remarkable proliferation of low-cost hardware has led to the emergence of real-time monitoring services.
In these systems, the monitor needs to know the status of one or multiple processes observed by remote sensors. 
To that extent, due to the channel constraint, the monitor selects only a subset of users from which it pulls the data that contains useful information to execute the convenient task.
The sensor is only responsible for sampling the process and transmitting the information content to the receiver on demand.

The main goal in these applications is to implement a scheduling scheme that keeps the monitor up to date by receiving fresh information from different sources.

This notion of freshness is captured by the Age of Information (AoI), which is introduced for the first time in \cite{kaul2012real}. More specifically, AoI can be viewed as the duration which separates the
generation of the last successfully received packet's time-stamp and the current time.

There are several works that consider this metric of AoI in different contexts and fields, and from different perspectives \cite{maatouk2020optimality,hsu2019scheduling,kadota2018scheduling,bedewy2019age,maatouk2020status,sun2018age,SunElif2017,Oguz2022}.
Nevertheless, even though this metric can quantify the information time lag at the monitor, it doesn't take into account the information content transmitted by the sensor. Several metrics have then been proposed in the literature to quantify the value and quality of the information \cite{zhong2018,Ayan2019,Escroc2019,Chiariotti2022,ElifSemantic}. 

In fact, AoI evolves regardless of the state of the remote source. For instance, in some scenarios, the state of the remote source is the same as the estimated state at the side of the monitor while AoI keeps growing. Whereas, no penalty should be incurred as the monitor is up to date. 

To meet the timeliness requirement while considering the content of the information sent, the authors in \cite{maatouk2020age} have designed a new metric dubbed Age of incorrect information. This metric grows only if the state of the remote sensor is different from the actual state and goes to zero otherwise. In \cite{Ayik2023}, an extension of this metric, called Age of Incorrect Information at Query (QAoII), is proposed. This metric considers that the information is only relevant at the times the receiver generates a query. 

The AoII has been analyzed in several papers and information sampling and scheduling schemes have been developed \cite{maatouk2020age,maatouk2020agenew,kam2020age,kriouile2021minimizing,chen2021minimizing,ChenAoII2021}. A Markovian source model is considered in \cite{maatouk2020age,maatouk2020agenew} under energy constraint. A threshold-based sampling policy has been derived and proved to be the optimal policy. A symmetric binary information source over a delay channel with feedback is considered in \cite{kam2020age}. The optimal information sampling policy is then derived by dynamic programming. Furthermore, in \cite{chen2021minimizing,ChenAoII2021}, the AoII metric is also considered and the authors assume that the scheduler has perfect knowledge about the source process at each time slot and restrict the analysis to one transmitter-receiver pair communication. Likewise, the authors in \cite{maatouk2020age,maatouk2020agenew,kam2020age} consider that the transmitter is responsible for observing at each time slot the state of the source in order to decide whether or not to send the packet. In the context of multiple sources/sensors, if each source decides on its own to transmit a packet, e.g. by random access techniques, collisions will occur which will reduce the system performance, especially in the context of high number of sensors (e.g. IoT scenario). A collision-free transmission can be obtained if  a central entity  decides whether a source must sample and transmit a packet or not. The central entity (e.g. monitor) cannot be aware of the status of the remote sources, and hence the AoII cannot be known perfectly by the scheduler. A prediction/estimation of the AoII must be performed, e.g. by  averaging over the different possible values of AoII. The predicted AoII will be then used in the scheduling decision by the monitor. This problem can be modeled as a Partially Observable Markov Decision Process (POMDP). To the best of our knowledge, the first work that studied such a framework in the context of AoI is   \cite{kriouile2021minimizing}, in which the  authors consider a multiple-transmitter-one-receiver scenario and aim to minimize the expected total average AoII by deriving the low complex and well-performing policy, called Whittle's index policy, which is optimal in the many-users regime.


However, in the aforementioned work, the mismatch between the source and monitor in the AoII metric is considered to be the indicator function. In other words, the distance between the different states of the Markovian source is not considered in  \cite{kriouile2021minimizing}, which makes it fall short in some real-life applications. For example, in temperature monitoring, the goal of the central entity could be to monitor the temperature variations of the system and quickly respond to these fluctuations. Thus, the freshness of information is not the only priority. We also need to monitor the temperature variations as high volatility or significant temperature variation is more harmful to the system than the smaller ones. This variation can be seen as the distance between the actual state of the remote source and the estimated state at the monitor's side: The further away the actual state is from the estimated state, the more we need to sanction the system.


For this reason, we consider in this work the AoII metric, in which the mismatch between the source state and the monitor knowledge is modeled as the distance between them. We tackle a realistic case in which a scheduler tracks the states of multiple remote sources and selects at each time a subset of them. These selected sources will transmit the information of interest while the others remain idle. The difficulty lies in the fact that the scheduler does not know the instantaneous state of the remote sources until it receives their updates. Our goal will be accordingly to minimize the total expected average AoII.  For that, we apply the well-known, well-performing, and low-complex Whittle's index policy referred by WIP \cite{Maialen18}. This policy has been widely adopted in the framework of the Age of Information (one can see \cite{tripathi2019whittle,kadota2018scheduling,sombabu2020whittle,maatouk2020optimality,SaadAoI2022} and the references therein). Contrary to these references,  we derive the expression of the estimated/predicted AoII at the monitor side and use it in the development of the WIP. Specifically, our contributions can be summarized as follows: i)  we consider a system where a central entity tracks the status of remote sensors. We formulate the AoII-based scheduling problem and show that it belongs to the family of Restles Multi-Armed Bandit (RMAB) problems. The challenge in this case is that the monitor/scheduler does not know the states of the sensors (i.e. the AoII exact value) and has then to predict its value in order to perform the scheduling, ii) as the optimal solution of this type of problem is known to be out of reach, we adopt the Lagrangian Relaxation approach that consists of relaxing and decomposing the problem in one-dimensional problems, iii) we establish the indexability of the one-dimensional problem by proving that its optimal solution is threshold-based policy, and iv) we derive the Whittle's index policy by leveraging the steady-state form of the one-dimensional problem under a given threshold policy.

\section{System Model}\label{sec:Syst_mod}
\subsection{Network description}\label{subsec:Net_descrip}
We consider in our paper $N$ sensors that generate and send status updates about the process of interest to a central entity over unreliable channels. Time is considered to be discrete and normalized to the time slot duration.
More precisely, when the monitor or the base station decides to schedule a given sensor $i$ at time $t$, this later samples its respective process, $X_i(t)$, and sends it to the monitor over an unreliable channel.
If the transmission is successful, the packet containing the information of interest $X_i(t)$ will be instantaneously delivered to the monitor. Then, the monitor keeps the information state $X_i(t)$ received by the sensor till the next successful transmission. That means that the information state at the side of the monitor denoted by $\hat{X}_i(t)$ is equal to $X_i(g_i(t))$ where $g_i(t)$ indicates the time-stamp of the last successfully received packet by the monitor.

As for the unreliable channel, we suppose that for user $i$, at each time slot $t$, the probability of having successful transmission is $\rho_i$, and $1-\rho_i$ otherwise. Consequently, the channel realizations are independent and identically distributed (i.i.d.) over time slots that we denote $c_i(t)$, i.e. $c_i(t)=1$ if the packet is successfully transmitted and $c_i(t)=0$ otherwise.

On the other hand, regarding the nature of the process of interest $X_i(t)$, we consider that for each user $i$, the process $X_i(t)$ evolves under Markov chain with infinite state space as represented in Figure \ref{fig:markov_source}. We define the probability of transitioning to the next state at the next time slot as $p_i$. Similarly, the probability of remaining at the same state is $1-p_i$. 
Furthermore, we consider that the distance between two consecutive states is $d_i$.

\begin{figure}[H]
\centering
\includegraphics[width=0.47\textwidth]{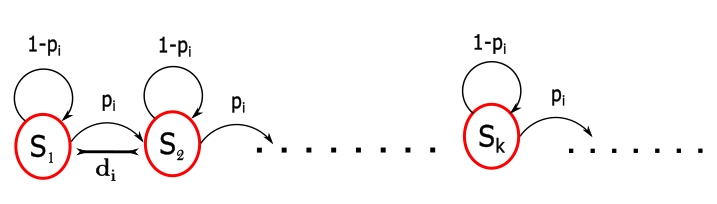}
\caption{Illustration of process $X_i(t)$}
\label{fig:markov_source}
\end{figure}
\subsection{Penalty function dynamics}
In this paper, we study the Age of incorrect information penalty function. We see how it is relevant and more realistic to consider the distance between the states of the source in order to have a good performance in some scenarios where the applications are sensitive to the gap between the estimated state and the current state of a given process. For that purpose, we start by reintroducing in the next section the standard metric, the age of information metric, to emphasize its shortcomings. Then we propose our adjusted AoII metric.

\subsubsection{Age of information penalty function}
The standard metric (AoI) that captures the freshness of information for user $i$ is: 
\begin{equation}
\delta_{AoI}(t)=t-g_i(t)
\end{equation}
where $g_i(t)$, as mentioned before,
is the time-stamp of the last successfully received packet by the monitor. This metric captures the lifetime of the last update at the monitor without taking into account the information state of the remote markovian source. Thereby, this makes it fall short in some applications. For instance, in some scenarios, the accuracy or performance of some applications relies heavily on information mismatch error-sensitive actions. Having said that, a higher penalty should be paid if the estimated source at the side of the monitor is far from the actual state of the remote source. In other words, the penalty function should be proportional to the distance between the estimated and the actual state. 


\subsubsection{Distance-based Age of incorrect information penalty function}\label{subsubsec:MAoII_pf} 
As was depicted in the Introduction, to capture the notion of the gap or the mismatch between the source and the monitor while satisfying the freshness requirement, one should adjust the AoII metric by integrating the distances between the states of the source. 
To that extent, we consider this following metric:
$$\sum_{u=V_i(t)}^t[X_i(u)-\hat{X}_i(t)]d_i$$
where $V_i(t)$ is the last time where $X_i(V_i(t))=\hat{X}_i(V_i(t))$.
\begin{remark}
Considering our system model described in \ref{sec:Syst_mod}, replacing $V_i(t)$ by $g_i(t)$ where $g_i(t)$ is the last successfully transmitted packet's time-stamp  gives us the same metric.
\end{remark}
To that extent, leveraging this remark above, we consider the following metric:
$$\sum_{u=g_i(t)}^t[X_i(u)-\hat{X}_i(t)]d_i$$
In this paper, we consider that the monitor that plays the scheduler's role knows only the state of the last successively received packet. Thus, the base station takes the expectation of AoII at each time slot. Accordingly, the explicit expression of the AoII metric in our case is:

\begin{equation}
\delta_{AoII}(t)=\mathbb{E}\sum_{u=g_i(t)}^t[X_i(u)-\hat{X}_i(t)]d_i
\end{equation}
In the sequel, we provide the closed-form expression of our metric in a Partially Observable Markov Decision Process Problem.

\subsection{Metrics evolution}\label{subsec:metrics_evolution}
In this section, we describe mathematically the evolution of our metric depending on the system parameters and the action taken.
We denote by $a_i(t)$ the action prescribed to user $i$ at time slot $t$ and by $s_i(t)$, the age of incorrect information penalty function at time slot $t$. 
According to the expression of AoII given in section \ref{subsubsec:MAoII_pf}, for $g_i(t) \leq u\leq t$, $X_i(u)-\hat{X}_i(u)$ is a random variable that we denote by $A_i(u)$ that satisfies:
\begin{lemma}\label{lem:random_variable}
\begin{align}
A_i(u)=\left\{
    \begin{array}{lll}
        u_{g_i}(0) & w.p & (p_i)^{u-g_i(t)}\\
        u_{g_i}(1) & w.p &  (p_i)^{u-g_i(t)-1}.\binom{u-g_i(t)}{1}(1-p_i)\\
        u_{g_i}(2) & w.p &  (p_i)^{u-g_i(t)-2}.\binom{u-g_i(t)}{2}(1-p_i)^2\\
        u_{g_i}(3) & \cdots  &         \cdots\\
        \vdots & &\\
        u_{g_i}(k) & w.p & (p_i)^{u-g_i(t)-k}.\binom{u-g_i(t)}{k}(1-p_i)^k\\
        \vdots & &\\              
        0& w.p & (1-p_i)^{u-g_i(t)} 
    \end{array}
\right.
\end{align}
\end{lemma}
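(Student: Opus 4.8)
The plan is to reduce the statement to the elementary fact that a sum of i.i.d.\ Bernoulli increments is binomially distributed. First I would fix the interval length $n := u - g_i(t)$ and note that, since $g_i(t)$ is determined solely by the channel realizations $c_i(\cdot)$, which are i.i.d.\ and independent of the source process, conditioning on the value of $g_i(t)$ leaves the transition law of $X_i$ unchanged. Moreover, on the interval $[g_i(t),t]$ no successful reception occurs (by definition of $g_i(t)$), so the estimate does not update and $\hat{X}_i(u) = X_i(g_i(t))$ for every $u$ in this range. Hence $A_i(u) = X_i(u) - X_i(g_i(t))$, and the whole problem is to compute the distribution of the displacement of the chain over $n$ steps.

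Next I would write the displacement as a telescoping sum of one-step increments. For $g_i(t) \le j < u$ set $\Delta_j := X_i(j+1) - X_i(j)$. By the dynamics in Figure \ref{fig:markov_source}, each step either advances to the next state, giving $\Delta_j = d_i$ with probability $p_i$, or remains in place, giving $\Delta_j = 0$ with probability $1 - p_i$; crucially this law does not depend on the current state, so the increments $\Delta_{g_i(t)}, \dots, \Delta_{u-1}$ are i.i.d. Writing $M := \sum_{j=g_i(t)}^{u-1} \Delta_j / d_i$ for the number of forward transitions, we get $A_i(u) = M\, d_i$ with $M \sim \mathrm{Bin}(n, p_i)$.

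It then remains to match the binomial law to the table in the statement. Since $\Pr(M = m) = \binom{n}{m} p_i^{\,m}(1-p_i)^{\,n-m}$ and the displacement equals $m\, d_i$, I would reindex by the number of \emph{stays} $k := n - m$. Using $\binom{n}{m} = \binom{n}{n-k} = \binom{n}{k}$, the event ``$k$ stays'' carries probability $\binom{n}{k} p_i^{\,n-k}(1-p_i)^{\,k}$ and displacement $u_{g_i}(k) = (n-k)\, d_i$, which is precisely the stated table: the top row $k=0$ (all advances) has probability $p_i^{\,n}$, and the bottom row $k=n$ (all stays) gives $A_i(u)=0$ with probability $(1-p_i)^{\,n}$.

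The argument is essentially routine, so there is no genuine obstacle beyond bookkeeping. The one point that warrants care is the claim that the per-step increments are independent and state-independent: this is exactly what promotes the Markov chain to a scaled binomial counting process and must be read off correctly from the ``advance w.p.\ $p_i$, stay w.p.\ $1 - p_i$'' description. The only other minor subtlety is the reindexing $m \leftrightarrow n-k$, which has to be tracked consistently so that the exponents of $p_i$ and of $1-p_i$ land on the intended factors.
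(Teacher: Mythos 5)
Your argument is correct and takes essentially the same route as the paper's own proof, which simply notes that $\hat X_i(u)=X_i(g_i(t))$ is a fixed constant and that $X_i(u)-X_i(g_i(t))$ is binomial with parameters $u-g_i(t)$ and $p_i$; your telescoping decomposition into i.i.d.\ Bernoulli increments just makes that assertion explicit. One bookkeeping remark: in the paper's convention $A_i(u)$ is the unscaled number of forward transitions (the factor $d_i$ is applied separately in the penalty $\sum_u d_i A_i(u)$, and $u_{g_i}(k)=u-g_i(t)-k$ carries no $d_i$), so your identification $A_i(u)=M\,d_i$ should read $A_i(u)=M$ --- the probabilities, which are the substance of the lemma, are unaffected.
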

where $\binom{n}{k}=\frac{n!}{(n-k)!k!}$ and $u_{g_i}(k)=u-g_i(t)-k$.
\begin{IEEEproof}
See appendix \ref{app:lem:random_variable}. 
\end{IEEEproof}
Therefore, the mean of AoII at slot $t$ equals to the mean of $\sum_{u=g_i(t)}^t d_i A_i(u)$, i.e.
\begin{lemma}\label{lem:CME_expression}
The mean of the AoII at slot $t$, denoted by $n_i(t)$ equals to:
$$n_i(t)=\sum_{u=g_i(t)}^t d_i\mathbb{E}[A_i(u)]=d_i p_i\frac{(t-g_i(t)+1)(t-g_i(t))}{2}$$
\end{lemma}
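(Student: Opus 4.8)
The plan is to compute the expectation $\mathbb{E}[A_i(u)]$ from the distribution given in Lemma \ref{lem:random_variable} and then sum it over $u$. Writing $m = u - g_i(t)$ for the number of slots elapsed since the last successful reception, I would first observe that the law stated in Lemma \ref{lem:random_variable} is a binomial distribution in disguise. Reindexing by $j = m - k$, the outcome $u_{g_i}(k) = m - k$ becomes the value $j$, and its probability $p_i^{m-k}\binom{m}{k}(1-p_i)^k$ becomes $\binom{m}{j}p_i^{j}(1-p_i)^{m-j}$ via the symmetry $\binom{m}{k} = \binom{m}{m-k}$. Hence $A_i(u)$ simply counts the number of forward transitions of the Markov chain among the $m$ slots since $g_i(t)$, i.e. $A_i(u) \sim \mathrm{Binomial}(m, p_i)$.

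With this identification the mean is immediate: $\mathbb{E}[A_i(u)] = m p_i = (u - g_i(t)) p_i$. If one prefers not to invoke the binomial law directly, the same value follows by applying the identity $j\binom{m}{j} = m\binom{m-1}{j-1}$ term by term inside $\sum_{j=0}^{m} j\binom{m}{j}p_i^{j}(1-p_i)^{m-j}$ and recognizing the remaining sum as a full probability mass equal to $1$.

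It then remains to substitute this into the definition of $n_i(t)$ and evaluate the resulting sum. Factoring out the constants $d_i p_i$ gives $n_i(t) = d_i p_i \sum_{u = g_i(t)}^{t}(u - g_i(t))$. Under the substitution $m = u - g_i(t)$, the index $m$ runs from $0$ to $t - g_i(t)$, so the sum is the arithmetic series $\sum_{m=0}^{t - g_i(t)} m = \tfrac{(t-g_i(t))(t-g_i(t)+1)}{2}$, which yields the claimed closed form.

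I do not anticipate a genuine obstacle, as the statement is a direct computation once Lemma \ref{lem:random_variable} is available. The only points requiring care are the bookkeeping in the reindexing $j = m - k$ (matching the binomial coefficient with the powers of $p_i$ and $1-p_i$ correctly) and keeping the summation limits consistent, since an off-by-one in the range of $m$ would corrupt the $(t-g_i(t)+1)$ factor in the final expression.
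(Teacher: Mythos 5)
Your proposal is correct and follows essentially the same route as the paper: both identify $\mathbb{E}[A_i(u)]=(u-g_i(t))p_i$ via the binomial mean (the paper reindexes the sum and invokes $\sum_k k p^k(1-p)^{n-k}\binom{n}{k}=np$, which is the same reindexing $j=m-k$ you describe) and then evaluate the arithmetic series $\sum_{m=0}^{t-g_i(t)}m$. No gaps.
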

\begin{proof}
See Appendix \ref{app:lem:CME_expression}
\end{proof}

As $n_i(t)$ depends only $t-g_i(t)$ and $i$, then, we let $n_i(t)\overset{\Delta}{=}n_i(t-g_i(t))$. Therefore:
\begin{equation}
n_i(j)=d_i p_i\frac{(j+1)j}{2}
\end{equation}
To that extent, at time slot $t+1$, if the user $i$ is scheduled and the packet is successively transmitted, then $g_i(t+1)=t+1$. Accordingly, at time slot $t+1$, AoII equals to $n_i(t+1-g_i(t+1))=n_i(0)$.   
If the user $i$ is not scheduled or if the packet is not successively transmitted, then $g_i(t+1)=g_i(t)$. Therefore, AoII will transit to $n_i(t+1-g_i(t+1))=n_i(t-g_i(t)+1)$. To sum up, the evolution of AoII can be summarized as follows:     
\begin{equation}
 s_i(t+1)=\left\{
    \begin{array}{ll}
        n_i(0) & if d_i(t+1)=1, c_i(t+1)=1 \\
        n_i(j+1) & else
    \end{array}
\right.
\end{equation}
where $s_i(t)=n_i(j)$.

\section{Problem formulation}\label{sec:prob_form}
We let the vector $\boldsymbol s$ at time $t$ be $\boldsymbol{s}(t)=(s_{1}(t),\ldots,s_{N}(t))$ where $s_i(t)$ is the penalty function at the central entity of user $i$ with respect to AoII metric at time slot $t$. Our aim is to find a scheduling policy that allocates per each time slot, the available channels ($M$ channels) to a given subset of users ($M$ users, $M \leq N$) in a such way to minimize the total expected average AoII penalty function. A scheduling policy $\phi$ is defined as a sequence of actions $\phi=(\boldsymbol{a}^{\phi}(0),\boldsymbol{a}^{\phi}(1),\ldots)$ where $\boldsymbol{a}^{\phi}(t)=(a_1^{\phi}(t),a_2^{\phi}(t),\ldots,a_{N}^{\phi}(t))$ is a binary vector such that $a_i^{\phi}(t)=1$ if the user $i$ is scheduled at time $t$. 
Denoting by $\Phi$, the set of all causal scheduling policies, then
our scheduling problem can be formulated as follows:
\begin{equation}
\setlength{\belowdisplayskip}{0pt} \setlength{\belowdisplayshortskip}{0pt}
\setlength{\abovedisplayskip}{0pt} \setlength{\abovedisplayshortskip}{0pt} 
\begin{aligned}
& \underset{\phi\in \Phi}{\text{minimize}}
& & \lim_{T\to+\infty} \text{sup}\:\frac{1}{T}\mathbb{E}^{\phi\in \Phi}\Big(\sum_{t=0}^{T-1}\sum_{i=1}^{N}s_i^{\phi}(t)|\boldsymbol{s}(0)\Big)\\
& \text{subject to}
& & \sum_{i=1}^{N}a_{i}^{\phi}(t)\leq\alpha N \quad t=1,2,\ldots
\end{aligned}
\label{eq:original_problem}
\end{equation}
where $\alpha N=M$.
The problem in (\ref{eq:original_problem}) falls into Restless Bandit framework.
RMAB problems are known to be generally difficult to solve them as they are PSPACE-Hard \cite{papadimitriou1999complexity}. To circumvent this complexity, we propose to implement a low-complex and well-performing policy called Whittle's index policy (WIP) \cite{weber1990index}. In order to get the Whittle's index values, we need to adopt the Lagrangian relaxation. 
To that extent, we introduce in the next section the Lagrangian relaxation approach applied to our RBP problem. Then, we provide the mathematical analysis to get the Whittle's index policy (WIP).

\section{Lagrangian Relaxation and Whittle's Index}\label{sec:lag_relax_whi_ind}
\subsection{Relaxed problem}
In order to derive the Whittle's index scheduling policy, we adopt the Lagrangian relaxation technique. First, it consists of relaxing the constraint on the available resources by letting it be satisfied on average rather than in every time slot. More specifically, we define our Relaxed Problem (\textbf{RP}) as follows:
\begin{equation}\label{eq:relaxed_problem}
\setlength{\belowdisplayskip}{0pt} \setlength{\belowdisplayshortskip}{0pt}
\setlength{\abovedisplayskip}{0pt} \setlength{\abovedisplayshortskip}{0pt} 
\begin{aligned}
& \underset{\phi\in \Phi}{\text{minimize}}
& & \lim_{T\to+\infty} \text{sup}\:\frac{1}{T}\mathbb{E}^{\phi}\Big(\sum_{t=0}^{T-1}\sum_{i=1}^{N}s_i^{\phi}(t)|\boldsymbol{s}(0)\Big)\\
& \text{subject to}
& & \lim_{T\to+\infty}\text{sup}\frac{1}{T}\mathbb{E}^{\phi}\Big(\sum_{t=0}^{T-1}\sum_{i=1}^{N}a_{i}^{\phi}(t)\Big)\leq\alpha N
\end{aligned}
\end{equation}
\color{black}
The Lagrangian function $f(W,\phi)$ of the problem \eqref{eq:relaxed_problem} is defined as:
\begin{equation}
\lim_{T\to+\infty} \text{sup}\:\frac{1}{T}\mathbb{E}^{\phi}\Big(\sum_{t=0}^{T-1}\sum_{i=1}^{N}s_i^{\phi}(t)+Wa_{i}^{\phi}(t)|\boldsymbol{s}(0)\Big)-W\alpha N
\end{equation}
where $W \geq 0$ can be seen as a penalty for scheduling users. Thus, by following the Lagrangian approach, our next objective is to solve the following problem:

\begin{equation}\label{eq:dual_problem}
\\\\ \underset{\phi\in \Phi}{\text{min}} f(W,\phi)
\end{equation}

As the term $W\alpha N$ is independent of $\phi$, it can be eliminated from the analysis. Baring that in mind, we present the steps to obtain the Whittle's index policy:
\begin{enumerate}
\item We decompose the problem in (\ref{eq:dual_problem}) into $N$ one-dimensional problems that can be solved independently (this has been shown in \cite{kriouile:hal-03437753}). Accordingly, we drop the user's index for ease of notation, and we deal with the one-dimensional problem:
\begin{equation}\label{eq:individual_dual_problem}
\setlength{\belowdisplayskip}{0pt} \setlength{\belowdisplayshortskip}{0pt}
\setlength{\abovedisplayskip}{0pt} \setlength{\abovedisplayshortskip}{0pt} 
\begin{aligned}
& \underset{\phi\in \Phi}{\text{min}}
& & \lim_{T\to+\infty} \text{sup}\:\frac{1}{T}\mathbb{E}^{\phi}\Big(\sum_{t=0}^{T-1}s^{\phi}(t)+Wa^{\phi}(t)|s(0)\Big)
\end{aligned}
\end{equation}
\item We give the structural results on the optimal solution of the one-dimensional problem.
\item We establish the indexability property of Problem \ref{eq:individual_dual_problem}.
\item Under indexability condition, we derive a closed-form expression of the Whittle's index values
\item We define the proposed scheduling policy (WIP) for the original problem (\ref{eq:original_problem}).
\end{enumerate}
\subsection{Structural results}
The problem in (\ref{eq:individual_dual_problem}) can be viewed as an infinite horizon average cost Markov decision process that is defined as follows:
\begin{itemize}
\item \textbf{States}: The state of the MDP at time $t$ is the penalty function $s(t)$.
According to Section \ref{subsec:metrics_evolution}, $s(t)$ evolves in the state space:
\begin{equation}
A=\{b^{j}: j \geq 0, b^{j}=dp\frac{j(j+1)}{2}\}
\end{equation} 
\item \textbf{Actions}: The action at time $t$, denoted by $a(t)$, specify if the user is scheduled (value $1$) or not (value $0$).
\item \textbf{Transitions probabilities}: The transitions probabilities between the different states.
\item \textbf{Cost}: We let the instantaneous cost of the MDP, $C(s(t),a(t))$, be equal to $s(t)+Wa(t)$.
\end{itemize}
The optimal policy $\phi^*$ of the one-dimensional problem \eqref{eq:individual_dual_problem} can be obtained by solving the following Bellman equation for each state $b^j$:
\begin{align}
\theta + V(b^{j})=\min\big\{&b^{j}+V(b^{j+1});\nonumber \\
&b^{j}+W+\rho  V(b^{0})+(1-\rho)V(b^{j+1})\big\} 
\label{eq:bellman_general}
\end{align}
where $\theta$ is the optimal value of the problem, $V(b^j)$ is the differential cost-to-go function.
Instead of resolving the equation \eqref{eq:bellman_general}, we will limit ourselves to study the structure of the optimal scheduling policy of \eqref{eq:bellman_general}. To that end, we adopt the relative value iteration algorithm (\textbf{RVIA}) as follows:
\begin{align}
V_t(b^{j})=\min\big\{&b^{j}+V_t(b^{j+1});\nonumber \\
&b^{j}+W+\rho  V_t(b^{0})+(1-\rho)V_t(b^{j+1})\big\} 
\label{eq:bellman_equation_time_t}
\end{align}

\begin{theorem}\label{theo:threshold_policy}
The optimal solution of the problem in (\ref{eq:individual_dual_problem}) is an increasing threshold policy. Explicitly, there exists $n$ such that when the current state $b^{j} < b^{n}$, the prescribed action is a passive action, and when $ b^{j} \geq b^{n}$, the prescribed action is an active action.
\end{theorem}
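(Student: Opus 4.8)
The plan is to treat \eqref{eq:individual_dual_problem} as the average-cost MDP whose optimality is characterized by the Bellman equation \eqref{eq:bellman_general}, and to extract the threshold structure from a single monotonicity property of the value function. Observe first that in \eqref{eq:bellman_general} the first branch is the cost of the passive action and the second is the cost of the active action; since both branches share the common term $b^j$, subtracting them shows that the active action is weakly preferred at state $b^j$ exactly when
\[
W + \rho\big(V(b^0) - V(b^{j+1})\big) \leq 0, \qquad\text{i.e.}\qquad V(b^{j+1}) - V(b^0) \geq \frac{W}{\rho}.
\]
Hence, to obtain an increasing threshold it suffices to show that $j \mapsto V(b^{j+1}) - V(b^0)$ is nondecreasing, which, as $V(b^0)$ is a constant, reduces to proving that $V$ is nondecreasing along the states $b^0, b^1, b^2, \dots$.

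Next I would establish this monotonicity through the relative value iteration \eqref{eq:bellman_equation_time_t}, using that $V_t \to V$ pointwise. Initializing $V_0 \equiv 0$, I would prove by induction on $t$ that $j \mapsto V_t(b^j)$ is nondecreasing for every $t$. Rewriting the iteration by factoring out the common cost as
\[
V_{t+1}(b^j) = b^j + \min\Big\{V_t(b^{j+1});\, W + \rho V_t(b^0) + (1-\rho) V_t(b^{j+1})\Big\},
\]
the induction hypothesis makes both arguments of the minimum nondecreasing in $j$ (the second because $1-\rho \geq 0$). The pointwise minimum of two nondecreasing functions is nondecreasing, and since the state values $b^j = dp\frac{j(j+1)}{2}$ are themselves strictly increasing in $j$, the sum $V_{t+1}(b^j)$ is nondecreasing in $j$. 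Passing to the limit preserves the weak inequality, so $V$ is nondecreasing.

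Finally I would close the argument by combining the two observations. Because $V$ is nondecreasing we have $V(b^0) \leq V(b^{j+1})$, so the quantity $V(b^{j+1}) - V(b^0)$ starts at a nonnegative value and is nondecreasing in $j$; therefore it crosses the level $W/\rho$ at most once. Letting $n$ be the first index at which $V(b^{n+1}) - V(b^0) \geq W/\rho$ (with $n = +\infty$ if the level is never reached and $n=0$ if it always holds) yields exactly the claimed structure: the passive action is optimal for $b^j < b^n$ and the active action for $b^j \geq b^n$.

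I expect the main obstacle to be not the monotonicity induction itself, which is routine once the iteration is written in the displayed factored form, but the technical justification that the average-cost problem is well posed on the infinite state space $A$: namely that a solution $(\theta, V)$ to \eqref{eq:bellman_general} exists and that the relative value iteration converges so that $V_t \to V$. I would handle this by invoking the standard conditions for average-cost MDPs, for instance via a vanishing-discount argument together with the observation that state $b^0$ is reached with probability $\rho$ under the active action, which gives the recurrence structure needed for existence and convergence. Once this is in place, the monotonicity argument above goes through unchanged and delivers the threshold policy.
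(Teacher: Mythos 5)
Your proposal is correct and follows essentially the same route as the paper's own proof: monotonicity of $V$ in $j$ established by induction on the relative value iteration \eqref{eq:bellman_equation_time_t}, followed by observing that the difference between the active and passive branches, $W+\rho\bigl(V(b^0)-V(b^{j+1})\bigr)$, is monotone in $j$ and therefore changes sign at most once. Your closing remark about the well-posedness of the average-cost problem on the infinite state space (existence of a solution to \eqref{eq:bellman_general} and convergence of the RVIA) is a legitimate technical point that the paper itself takes for granted rather than verifying.
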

\begin{IEEEproof}
See Appendix \ref{app:theo:threshold_policy}.
\end{IEEEproof}

\subsection{Indexability and Whittle's index expressions}
In order to establish the indexability of the problem and find the Whittle's index expressions, we provide the steady-state form of the problem in (\ref{eq:individual_dual_problem}) under a given threshold policy $n$. Explicitly:
\begin{equation}
\begin{aligned}
& \underset{n\in \mathbb{N}^*}{\text{minimize}} 
& & \overline{s^{n}}+W\overline{a^n}
\end{aligned}
\label{thresholdobjective}
\end{equation}
where $\overline{s^{n}}$ is the average value of the penalty function with respect to the AoII metric, and $\overline{a^n}$ is the average active time under threshold policy $n$. Specifically: 
\begin{align}
\overline{s^{n}}&=\lim_{T\to+\infty} \text{sup}\:\frac{1}{T}\mathbb{E}^{n}\Big(\sum_{t=0}^{T-1}s(t)|s(0),tp(n)\Big)\label{eq:average_age}\\
\overline{a^n}&=\lim_{T\to+\infty} \text{sup}\:\frac{1}{T}\mathbb{E}^{n}\Big(\sum_{t=0}^{T-1}a(t)|s(0),tp(n)\Big)\label{eq:average_active_time}
\end{align}
where $tp(n)$ denotes the threshold policy $n$.
With the aim of computing $\overline{s^{n}}$ and $\overline{a^n}$, we derive the stationary distribution of the Discrete Time Markov Chain, DTMC that represents the evolution of AoII under threshold policy $n$. Specifically: 
\begin{proposition}\label{prop:stationary_distribution}
For a given threshold $n$, the DTMC admits $u^n(b^{j})$ as its stationary distribution:
\begin{equation}
 u^n(b^{j})=\left\{
    \begin{array}{ll}
        \frac{\rho  }{n\rho  +1} & \text{if} \ 0 \leq j \leq n  \\
        (1-\rho  )^{j-n} \frac{\rho  }{n\rho  +1  } & \text{if} \ j \geq n+1\\
    \end{array}
\right.
\end{equation}
\label{stationarydistribution}
\end{proposition}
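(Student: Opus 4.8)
The plan is to read off the one-step transition kernel of the AoII chain under a fixed threshold $n$ directly from the dynamics of Section \ref{subsec:metrics_evolution} together with the threshold structure of Theorem \ref{theo:threshold_policy}, and then to solve the resulting global balance equations by a short recursion. Under threshold policy $n$, a passive action is prescribed whenever $j < n$ and an active action whenever $j \geq n$. Hence the chain on the states $b^j$ has a \emph{skip-free upward} structure: from any state it can only move up by one level, $b^j \to b^{j+1}$, except that from an active state ($j \geq n$) a successful transmission (probability $\rho$) resets it to $b^0$. Concretely, for $j < n$ the transition is $b^j \to b^{j+1}$ with probability $1$; for $j \geq n$ we have $b^j \to b^0$ with probability $\rho$ and $b^j \to b^{j+1}$ with probability $1-\rho$. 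The fact that $b^0$ is the unique reset target and that every other move is a unit upward step is what makes the balance equations tractable.

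First I would write the global balance equation $u^n = u^n P$ level by level. For $1 \leq k \leq n$ the sole inflow into $b^k$ comes from $b^{k-1}$, which is a passive state and climbs with probability one, so $u^n(b^k) = u^n(b^{k-1})$; iterating gives $u^n(b^0) = u^n(b^1) = \cdots = u^n(b^n) =: c$, a common constant on the plateau. For $k \geq n+1$ the only inflow into $b^k$ is again from $b^{k-1}$, now an active state, and occurs with probability $1-\rho$, yielding the geometric recursion $u^n(b^k) = (1-\rho)\,u^n(b^{k-1})$. Solving it down to level $n$ gives $u^n(b^j) = (1-\rho)^{j-n} c$ for all $j \geq n+1$, which already reproduces the two branches of the claimed expression up to the value of $c$.

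It then remains to pin down $c$ by normalization. Summing the finite plateau and the geometric tail,
\begin{equation}
\sum_{j \geq 0} u^n(b^j) = (n+1)c + c\sum_{m=1}^{\infty}(1-\rho)^m = c\left((n+1) + \frac{1-\rho}{\rho}\right) = c\,\frac{n\rho+1}{\rho},
\end{equation}
and setting this equal to $1$ gives $c = \rho/(n\rho+1)$, exactly the asserted value. The geometric series converges precisely because $\rho > 0$ (a non-degenerate channel), which is also what guarantees positive recurrence and hence that the stationary distribution is well defined.

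The computation is otherwise routine, so the one step I would check most carefully is the consistency of the remaining balance equation at the reset target $b^0$. Since every upward move lands in a state of index at least $1$, the state $b^0$ receives inflow only through resets from the active states, so its balance reads $u^n(b^0) = \rho\sum_{j \geq n} u^n(b^j)$. Substituting the recursive solution, $\sum_{j \geq n} u^n(b^j) = c\bigl(1 + \sum_{m \geq 1}(1-\rho)^m\bigr) = c/\rho$, so the right-hand side equals $\rho \cdot c/\rho = c = u^n(b^0)$; the equation is therefore satisfied automatically and imposes no further constraint on $c$. Confirming that this system is consistent rather than over-determined, together with the normalization above, completes the derivation of Proposition \ref{prop:stationary_distribution}.
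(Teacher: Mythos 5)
Your proposal is correct and follows essentially the same route as the paper, which likewise proves the proposition by solving the global balance equations $u^n(b^j)=\sum_i pt^n(i\to j)u^n(b^i)$ under the threshold-$n$ transition kernel; the paper simply omits the computations ("after some computations, we obtain the desired result"), whereas you carry them out explicitly via the skip-free-upward recursion, normalization, and the consistency check at the reset state $b^0$. No gaps.
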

\vspace{-10pt}
\begin{IEEEproof}
The proof can be found in Appendix \ref{app:prop:stationary_distribution}.
\end{IEEEproof}

Leveraging the above results, we provide the closed-form expression of the AoII under any threshold policy.

\begin{proposition}\label{prop:mean_age_expression}
For a given threshold $n$, the average AoII under the threshold policy is $\overline{s^{n}}$:
\begin{align}
\overline{s^{n}}=&dp\frac{\rho }{n\rho  +1}[\frac{1}{6}n^3+\frac{1}{2\rho}n^2+\frac{6-\rho^2-3\rho}{6\rho^2}n+\frac{1-\rho}{\rho^3}]
\end{align}
\end{proposition}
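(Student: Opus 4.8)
The plan is to compute the stationary average directly from its definition, $\overline{s^{n}} = \sum_{j\ge 0} b^{j}\,u^{n}(b^{j})$, by substituting the stationary distribution from Proposition \ref{prop:stationary_distribution} together with the state values $b^{j}=dp\,j(j+1)/2$, and then evaluating the resulting series in closed form. Factoring out the common constant $dp\,\rho/(n\rho+1)$, the expectation splits according to the two regimes of $u^{n}$: a finite sum over the flat region $0\le j\le n$ and a geometrically weighted tail over $j\ge n+1$. Concretely I would write
$$\overline{s^{n}}=dp\,\frac{\rho}{n\rho+1}\Bigl[\sum_{j=0}^{n}\frac{j(j+1)}{2}+\sum_{j=n+1}^{\infty}\frac{j(j+1)}{2}(1-\rho)^{j-n}\Bigr].$$

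For the finite sum, $\sum_{j=0}^{n} j(j+1)/2$ is a tetrahedral number and evaluates to $n(n+1)(n+2)/6$, supplying the leading cubic term. For the tail I would substitute $k=j-n$ and set $q=1-\rho$, expanding $(k+n)(k+n+1)=k^{2}+(2n+1)k+(n^{2}+n)$ so that the tail reduces to a linear combination of the elementary series $\sum_{k\ge1}q^{k}=(1-\rho)/\rho$, $\sum_{k\ge1}kq^{k}=(1-\rho)/\rho^{2}$, and $\sum_{k\ge1}k^{2}q^{k}=(1-\rho)(2-\rho)/\rho^{3}$, all convergent since $|1-\rho|<1$. Collecting these three contributions expresses the tail in closed form as a polynomial in $n$ with coefficients rational in $\rho$.

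The final step is to add the two pieces and collect powers of $n$. One checks that the cubic coefficient is $1/6$; the quadratic coefficient combines $1/2$ from the tetrahedral number with $(1-\rho)/(2\rho)$ from the tail to give exactly $1/(2\rho)$; the linear coefficient sums $1/3$, $(1-\rho)/\rho^{2}$ and $(1-\rho)/(2\rho)$ over the common denominator $6\rho^{2}$ to yield $(6-\rho^{2}-3\rho)/(6\rho^{2})$; and the constant term $(1-\rho)(2-\rho)/(2\rho^{3})+(1-\rho)/(2\rho^{2})$ collapses to $(1-\rho)/\rho^{3}$ after factoring $(1-\rho)/(2\rho^{3})$ and using $(2-\rho)+\rho=2$. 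This reproduces the stated bracketed expression.

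I expect no conceptual obstacle: the entire argument is a convergent-series computation. The only real work lies in the last step, namely the bookkeeping of which power of $n$ each contribution lands in and merging the rational functions of $\rho$ over a common denominator without slip. A convenient sanity check along the way is that summing the same two regimes without the $b^{j}$ weights returns total mass $1$ for $u^{n}$, which confirms both the distribution from Proposition \ref{prop:stationary_distribution} and the consistency of the series manipulations.
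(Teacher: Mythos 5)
Your proposal is correct and follows essentially the same route as the paper's proof: both compute $\overline{s^{n}}=\sum_{j\ge0}b^{j}u^{n}(b^{j})$ from the stationary distribution of Proposition \ref{prop:stationary_distribution}, split the sum into the flat region $0\le j\le n$ and the geometric tail $j\ge n+1$, shift the tail index, and evaluate the resulting elementary series before collecting powers of $n$. The only cosmetic difference is that the paper obtains $\sum_{j\ge1}j(j+1)q^{j}$ by differentiating the geometric series twice, whereas you decompose into $\sum k^{2}q^{k}$, $\sum kq^{k}$, $\sum q^{k}$; the bookkeeping you describe checks out against the stated coefficients.
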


\begin{IEEEproof}
See Appendix \ref{app:prop:mean_age_expression}.
\end{IEEEproof}

\begin{proposition}
For any given threshold $n$, the active average time is $\overline{a^n}$:
\begin{align}
\overline{a^n}=&\frac{1}{n\rho +1}
\end{align}
\end{proposition}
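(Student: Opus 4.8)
The plan is to recast the Cesàro time-average in \eqref{eq:average_active_time} as a stationary expectation and then read off the answer directly from Proposition~\ref{prop:stationary_distribution}. First I would observe that the DTMC describing the AoII evolution under threshold policy $n$ is positive recurrent (a stationary distribution exists by Proposition~\ref{prop:stationary_distribution}) and ergodic, so the limiting time-average of the action coincides with the expectation of the instantaneous action indicator under the stationary law $u^n$. By Theorem~\ref{theo:threshold_policy} the threshold policy prescribes the active action on exactly the states $b^j$ with $j \ge n$ and the passive action on $j < n$, i.e. the action is $\mathbbm{1}\{j \ge n\}$. Hence
\begin{equation}
\overline{a^n} = \sum_{j \ge n} u^n(b^j).
\end{equation}

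Next I would substitute the two-piece expression for $u^n$, separating the boundary term $j=n$ (which belongs to the active set) from the geometric tail $j \ge n+1$:
\begin{equation}
\overline{a^n} = \frac{\rho}{n\rho+1} + \frac{\rho}{n\rho+1}\sum_{j \ge n+1}(1-\rho)^{j-n}.
\end{equation}
Reindexing the tail with $k = j-n \ge 1$ reduces it to the geometric series $\sum_{k\ge1}(1-\rho)^k = (1-\rho)/\rho$, so that
\begin{equation}
\overline{a^n} = \frac{\rho}{n\rho+1}\Big(1 + \frac{1-\rho}{\rho}\Big) = \frac{\rho}{n\rho+1}\cdot\frac{1}{\rho} = \frac{1}{n\rho+1}.
\end{equation}

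There is no serious obstacle once Proposition~\ref{prop:stationary_distribution} is available; the only points requiring care are (i) correctly identifying the active set, since the active action is taken at the threshold state $j=n$ itself, so the sum must start at $j=n$ rather than $j=n+1$, and (ii) the passage from the time-average definition \eqref{eq:average_active_time} to the stationary expectation, which relies on positive recurrence and ergodicity. As a cross-check I would also derive the same result by a flow/balance argument: the state $b^0$ is entered only through a successful transmission from an active state, so $u^n(b^0) = \rho\sum_{j\ge n}u^n(b^j) = \rho\,\overline{a^n}$, whence $\overline{a^n} = u^n(b^0)/\rho = 1/(n\rho+1)$; and I would verify $\sum_{j\ge 0}u^n(b^j)=1$ by the same geometric-series manipulation, confirming both the normalization of $u^n$ and the consistency of the bookkeeping.
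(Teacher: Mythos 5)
Your proposal is correct and follows essentially the same route as the paper: summing the stationary distribution from Proposition~\ref{prop:stationary_distribution} over the active states $j \ge n$ and evaluating the geometric tail. The paper states this in one line; your version merely makes explicit the ergodicity step, the identification of the active set, and a normalization cross-check, all of which are consistent with the paper's argument.
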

\begin{IEEEproof}
Exploiting the results in Proposition \ref{prop:stationary_distribution} and according to the expression \eqref{eq:average_active_time}, we have:
\begin{equation}
\overline{a^n}=\sum_{j=n}^{+\infty} u^n(j)=\frac{1}{n\rho +1}
\end{equation}
Hence, we obtain our desired results.
\end{IEEEproof}

To ensure the existence of the Whittle's indices, we need first to establish the indexability property for all users' classes.
A class is indexable if the set of states in which the passive action is the optimal action with respect to the optimal solution of Problem~\eqref{eq:individual_dual_problem}, increases with the Lagrangian parameter $W$. One can see \cite{kriouile2021minimizing} for a rigorous Definition of Indexability property as well as Whittle's index.  
We note that in the sequel, we precise the indices of users to differentiate between them.

\begin{proposition}
For each user $i$, the one-dimensional problem is indexable.
\end{proposition}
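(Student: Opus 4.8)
The plan is to reduce indexability to a single monotonicity statement about the optimal threshold, and then to prove that monotonicity directly from the closed-form steady-state costs. By Theorem~\ref{theo:threshold_policy}, for every $W$ the optimal policy of Problem~\eqref{eq:individual_dual_problem} is an increasing threshold policy, so its passive set is entirely determined by the threshold: writing $n^*(W)$ for the optimal threshold, the set of states on which the passive action is optimal is $D_i(W)=\{b^j:0\le j<n^*(W)\}$. Hence $D_i(W)$ is increasing in $W$ (which is exactly indexability) if and only if $n^*(W)$ is non-decreasing in $W$, and it suffices to establish the latter.

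First I would justify that $n^*(W)$ is obtained by minimizing the steady-state cost. Under any threshold $n$ the induced DTMC is positive recurrent with the stationary distribution of Proposition~\ref{prop:stationary_distribution}, so the long-run average cost of the threshold-$n$ policy equals $C(n,W):=\overline{s^{n}}+W\overline{a^{n}}$, with $\overline{s^{n}}$ and $\overline{a^{n}}$ available in closed form. Thus $n^*(W)=\argmin_{n}C(n,W)$. A finite minimizer exists because $\overline{s^{n}}$ grows like $n^2$ (the bracketed factor is cubic while the prefactor $\rho/(n\rho+1)$ decays like $1/n$), whereas $W\overline{a^{n}}=W/(n\rho+1)\to0$, so $C(n,W)\to+\infty$ as $n\to\infty$.

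The core step is the monotonicity, which I would obtain by an exchange argument exploiting that $\overline{a^{n}}=1/(n\rho+1)$ is strictly decreasing in $n$. Suppose for contradiction that $W<W'$ but $n:=n^*(W)>m:=n^*(W')$. Optimality at $W$ gives $\overline{s^{n}}-\overline{s^{m}}\le W(\overline{a^{m}}-\overline{a^{n}})$, while optimality at $W'$ gives $\overline{s^{n}}-\overline{s^{m}}\ge W'(\overline{a^{m}}-\overline{a^{n}})$. Since $n>m$ and $\overline{a^{\cdot}}$ is strictly decreasing, $\overline{a^{m}}-\overline{a^{n}}>0$, and dividing the two inequalities yields $W'\le W$, a contradiction. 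Therefore $n^*(W)$ is non-decreasing and the passive sets are nested, establishing indexability. Equivalently, this shows that $C(n,W)$ has decreasing differences in $(n,W)$, so the conclusion is also a direct instance of Topkis' monotone comparative statics theorem.

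The main obstacle is less the monotonicity inequality itself — which is short once $\overline{a^{n}}$ is known to be decreasing — than the bookkeeping needed to make the reduction airtight: confirming that minimizing the steady-state expression $C(n,W)$ genuinely returns the average-cost-optimal threshold of Problem~\eqref{eq:individual_dual_problem} (ergodicity being supplied by Proposition~\ref{prop:stationary_distribution}), and handling ties in the $\argmin$ consistently so that $\{D_i(W)\}_W$ is genuinely monotone rather than monotone only away from the isolated indifference values of $W$. Fixing a tie-breaking convention, for instance always selecting the smallest optimal threshold, resolves the latter and delivers the nested passive sets required by the definition of indexability.
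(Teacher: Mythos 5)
Your proof is correct and rests on the same key fact as the paper's own argument: that $\overline{a_i^{n}}=1/(n\rho_i+1)$ is strictly decreasing in $n$, which the paper verifies by the explicit difference $\overline{a_i^{n+1}}-\overline{a_i^{n}}=-\rho_i/\bigl((n\rho_i+1)(n\rho_i+\rho_i+1)\bigr)\leq 0$. The paper outsources the sufficiency of this monotonicity to a citation of \cite{maatouk2020optimality}, whereas you unpack that citation via the exchange/Topkis argument; the approach is essentially the same.
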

\begin{IEEEproof}
It is sufficient to show that $\overline{a_i^n}$ decreases with $n$ to establish our desired result (see \cite{maatouk2020optimality}). Indeed, we have that:
\begin{equation}
\overline{a_i^{n+1}}-\overline{a_i^n}=-\frac{\rho}{(n\rho+1)(n\rho+\rho+1)}\leq 0
\end{equation}
That concludes the proof.
\end{IEEEproof}
As the indexability property has been established in the above proposition, we can now assert the existence of the Whittle's index.
With the intention of comparing the two metrics AoI and AoII, we provide in the following Theorem the Whittle's index values distinguishing between two cases: the first case where we consider the AoI metric, and the second one where we consider the AoII metric.
For a sake of clarity, we recall that the state space of AoI for a given user $i$ is $A_i^{aoi}=\{c_i^j: c_i^j=j, j \in \mathbf{N} \}$    
\begin{theorem}\label{theo:Whittle_index_expressions}
For any user $i$, the Whittle's index is:
\begin{itemize}
\item AoI:
\begin{equation}
W_i(c_i^{n})=\frac{n(n+1)\rho_i}{2}+n+1
\end{equation}
\item AoII:
\begin{align}\label{eq:WIP_AoII_expressions}
W_i(b_i^{n})=&d_ip_i \biggl[\frac{1}{3}\rho_i n^3+(1+\frac{\rho_i}{2})n^2+(1+\frac{\rho_i}{6}+\frac{1}{\rho_i})n \nonumber \\ +&\frac{1}{\rho_i}\biggr]=d_i p_i w_i(n)
\end{align}
\end{itemize}
\end{theorem}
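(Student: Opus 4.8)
The plan is to characterize each Whittle's index as the penalty $W$ at which the controller is indifferent, in the relevant state, between scheduling and staying passive, and then to read this value off the steady-state cost \eqref{thresholdobjective}. By Theorem~\ref{theo:threshold_policy} every optimal policy is an increasing threshold policy, so for fixed $W$ the optimal threshold is the integer $n$ minimizing $\overline{s^n}+W\overline{a^n}$. Thresholds $n$ and $n+1$ differ only in the action taken at state $b^n$ (active under $n$, passive under $n+1$), hence the index at $b^n$ is the unique $W$ making these two thresholds equally good. Setting $\overline{s^n}+W\overline{a^n}=\overline{s^{n+1}}+W\overline{a^{n+1}}$ yields the difference quotient
\[
W(b^n)=\frac{\overline{s^{n+1}}-\overline{s^n}}{\overline{a^n}-\overline{a^{n+1}}}.
\]
Indexability (established just above) is exactly what makes this quantity a legitimate, monotone index.

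For the AoII metric I would substitute the closed forms already derived: from the active-time expression, $\overline{a^n}-\overline{a^{n+1}}=\rho/[(n\rho+1)((n+1)\rho+1)]$, and from Proposition~\ref{prop:mean_age_expression}, $\overline{s^n}=dp\,\rho\,h(n)/(n\rho+1)$ with $h(n)$ the bracketed cubic. Using the identity $h(n+1)(n\rho+1)-h(n)((n+1)\rho+1)=(n\rho+1)[h(n+1)-h(n)]-\rho\,h(n)$, the factors $(n\rho+1)$ and $((n+1)\rho+1)$ in the numerator and denominator cancel exactly, leaving
\[
W(b^n)=dp\bigl[(n\rho+1)\,\Delta h-\rho\,h(n)\bigr],\qquad \Delta h=h(n+1)-h(n),
\]
a pure polynomial in $n$. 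Expanding $\Delta h$ (a quadratic, whose constant term collapses to $1/\rho^2$) and combining with $\rho\,h(n)$ should reproduce the claimed $d_ip_iw_i(n)$.

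The AoI case follows the same template, the only change being the per-state cost. Since the transition dynamics are unchanged, the stationary distribution of Proposition~\ref{prop:stationary_distribution} still applies, so I would first compute $\overline{s^n}_{\mathrm{AoI}}=\sum_j j\,u^n(c^j)$ by splitting the sum at $j=n$ and evaluating the geometric tail $\sum_{k\ge 1}(n+k)(1-\rho)^k$; this gives $\overline{s^n}_{\mathrm{AoI}}=g(n)/(n\rho+1)$ for an explicit quadratic $g$. Because $\overline{a^n}$ is identical, the same reduction applies and gives $W(c^n)=[(n\rho+1)\,\Delta g-\rho\,g(n)]/\rho$ with $\Delta g=g(n+1)-g(n)$; here one finds $\Delta g=\rho n+1$, so $(n\rho+1)\,\Delta g=(n\rho+1)^2$, and the algebra simplifies quickly to $\tfrac{n(n+1)\rho}{2}+n+1$.

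The main obstacle is purely computational: verifying that the cubic bookkeeping in the AoII case genuinely collapses to $w_i(n)$. The awkwardness comes from the $1/(n\rho+1)$ factor inside $\overline{s^n}$, which produces rational intermediate expressions; the decisive step that removes it is the algebraic identity above, after which only the $n^3, n^2, n, 1$ coefficients must be matched, the $\rho$-dependent constants (such as $\tfrac16+\tfrac1{2\rho}+\tfrac{6-\rho^2-3\rho}{6\rho^2}=\tfrac1{\rho^2}$ and $\tfrac1{\rho^2}-\tfrac{1-\rho}{\rho^2}=\tfrac1\rho$) being the only places where care is needed. No conceptual ingredient beyond the indifference argument and the already-proved steady-state formulas is required.
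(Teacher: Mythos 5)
Your proposal is correct and follows essentially the same route as the paper: the Whittle index is obtained as the intersection point $W(b^n)=(\overline{s^{n+1}}-\overline{s^{n}})/(\overline{a^{n}}-\overline{a^{n+1}})$ of the steady-state costs of adjacent thresholds, justified by indexability and the monotonicity of these intersection points, and the algebraic reductions you outline (including the collapse of the constant terms to $1/\rho^2$ and $1/\rho$) do check out. The only difference is cosmetic: the paper imports the AoI formula from prior work, whereas you rederive it from the same stationary distribution.
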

\begin{proof}
The proof can be found in Appendix \ref{app:theo:Whittle_index_expressions}.
\end{proof}
Based on the above proposition, we provide in the following the Whittle's index scheduling policy for the original problem \eqref{eq:original_problem}.
\begin{algorithm}
\caption{Whittle's index scheduling policy}\label{euclid}
\begin{algorithmic}[1]
\State At each time slot $t$, compute the Whittle's index of all users in the system using the expressions given in Proposition \ref{theo:Whittle_index_expressions}.
\State Allocating the $M$ channels to the $M$ users having the highest Whittle's index values at time $t+1$.
\end{algorithmic}
\end{algorithm}\\

\section{Numerical Results}\label{sec:num_reslt}
Our goal in this section is to compare the average empirical age of incorrect information under the developed Whittle's index policy WIP-AoII to the one under the baseline policy, denoted by WIP-AoI (derived in \cite{maatouk2020optimality}), that considers the standard AoI metric.
More precisely, we plot  $C^{\phi,N}=\frac{1}{N}\lim_{T\to+\infty} \text{sup}\:\frac{1}{T}\mathbb{E}^{\phi}\Big(\sum_{t=0}^{T-1}\sum_{i=1}^{N}s_i^{emp,\phi}(t)|\boldsymbol{s}^{emp}(0),\phi\Big)$ for $\phi$ equals to WIP-AoII and WIP-AoI, in function of $N$, where $s^{emp,\phi}_i(\cdot)$ evolves as follows:
\begin{itemize}
\item If $\phi_i(t+1)=1$: If the packet is successfully transmitted, then $\hat{X}_i(t+1)=X_i(t+1)$. Hence the AoII will move to the state $0$. 
If the packet is unsuccessfully transmitted, then the monitor maintains the last estimated value of the process of interest, i.e. $\hat{X}_i(t+1)=\hat{X}_i(t)$.
We have $s_i^{emp,\phi}(t)$ evolves in the state space $\{\sum_{k=0}^j k=\frac{(j+1)j}{2}: j \in \mathbb{N} \}$. 
Therefore, if $s_i^{emp,\phi}(t)=\frac{j(j+1)}{2}$, the value of $s_i^{emp,\phi}(t+1)$ is as follows: 
\begin{equation}
 s_i^{emp,\phi}(t+1)=\left\{
    \begin{array}{lll}
        d_i\frac{(j+2)(j+1)}{2} & w.p & (1-\rho_i)p_i\\
        d_i\frac{(j+1)j}{2} & w.p & (1-\rho_i)(1-p_i) \\
         0                  &w.p      & \rho_i \\
    \end{array}
\right.
\end{equation}
\item If $\phi_i(t)=0$: The monitor maintains the last estimated value of the process of interest, i.e. $\hat{X}_i(t+1)=\hat{X}_i(t)$.
Therefore, the value of $s_i^{emp,\phi}(t+1)$ is as follows:
\end{itemize}
\begin{equation}
 s_i^{emp,\phi}(t+1)=\left\{
    \begin{array}{lll}
        d_i\frac{(j+2)(j+1)}{2} & w.p & p_i\\
        d_i\frac{(j+1)j}{2} & w.p & 1-p_i \\
    \end{array}
\right.
\end{equation}

We showcase two scenarios of the network settings. In the first scenario, to shed light on the importance of taking into account the source parameters namely, $p_i$, in the derivation of Whittle's indices, we consider that the two classes share the same channel statistics, specifically $\rho_1=\rho_2$, while they don't have the same source parameters. In this case, we  compare the performance of WIP-AoII with WIP-AoI. For the second scenario, to highlight the importance of considering the Whittle index expressions derived through this paper precisely the function $w_i(.)$ in equation \eqref{eq:WIP_AoII_expressions}, we compare our proposed solution with the weighted-baseline policy denoted by WWIP-AoI where the expression of Whittle indices of WIP-AoI of class $i$ are multiplied by the factor $p_i d_i$. For the first scenario, we consider two classes with the respective parameters: i) Class 1:  $\rho_1=0.5$, $d_1=5$, $p_1=0.1$, and ii) Class 2: $\rho_2=0.5$, $d_2=5$, $p_2=0.9$. For the second scenario, we consider the following parameters:  i) Class 1:  $\rho_1=0.5$, $d_1=1$, $p_1=0.5$, and ii) Class 2: $\rho_2=0.5$, $d_2=100$, $p_2=0.5$.

\color{black}

\begin{figure}\label{fig:comp_aoi_maoii}
\centering
\includegraphics[scale=0.6]{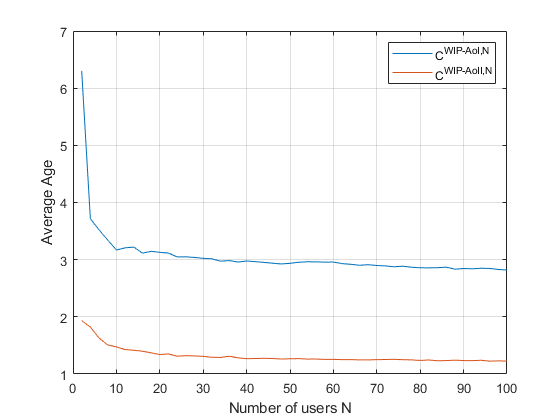}
\caption{Comparison between WIP-AoII and WIP-AoI in terms of the empirical AoII}
\end{figure}

\begin{figure}\label{fig:comp_waoi_maoii}
\centering
\includegraphics[scale=0.6]{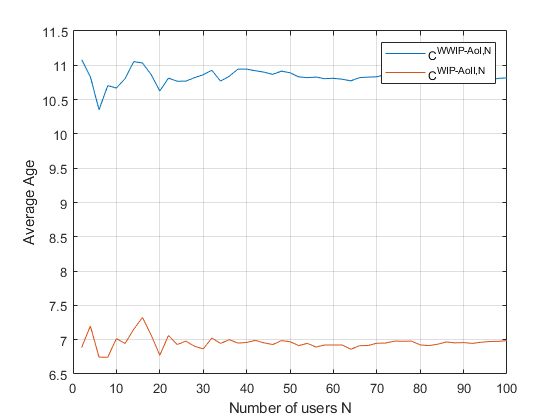}
\caption{Comparison between WWIP-AoII and WIP-AoI in terms of the empirical AoII}
\end{figure}

One can observe in Figures 2 and 3 that effectively WIP-AoII gives us better performance than WIP-AoI and WWIP-AoI in terms of minimizing the average empirical age of incorrect information considering the distances between the states.

\section{Conclusion}\label{sec:concl}
In this paper, we considered an instance of Age of Incorrect Information that takes into account the distances between the different states of a given Markov process. 
We considered a scheduling problem of a central entity that selects at each time slot a subset of the sensors to sample the Markovian sources and send instantaneously their updates in such a way to minimize the metric in question. Since the scheduler is unaware of the current state of the source, we computed the mean of AoII at each time slot. We then developed an efficient scheduling policy based on Whittle's index framework. Finally, we have provided numerical results that highlight the performance of our policy.

\bibliographystyle{IEEEtran} 
\bibliography{bibliography}

\begin{thebibliography}{10}
\providecommand{\url}[1]{#1}
\csname url@samestyle\endcsname
\providecommand{\newblock}{\relax}
\providecommand{\bibinfo}[2]{#2}
\providecommand{\BIBentrySTDinterwordspacing}{\spaceskip=0pt\relax}
\providecommand{\BIBentryALTinterwordstretchfactor}{4}
\providecommand{\BIBentryALTinterwordspacing}{\spaceskip=\fontdimen2\font plus
\BIBentryALTinterwordstretchfactor\fontdimen3\font minus
  \fontdimen4\font\relax}
\providecommand{\BIBforeignlanguage}[2]{{%
\expandafter\ifx\csname l@#1\endcsname\relax
\typeout{** WARNING: IEEEtran.bst: No hyphenation pattern has been}%
\typeout{** loaded for the language `#1'. Using the pattern for}%
\typeout{** the default language instead.}%
\else
\language=\csname l@#1\endcsname
\fi
#2}}
\providecommand{\BIBdecl}{\relax}
\BIBdecl

\bibitem{kaul2012real}
S.~Kaul, R.~Yates, and M.~Gruteser, ``Real-time status: How often should one
  update?'' in \emph{2012 Proceedings IEEE INFOCOM}.\hskip 1em plus 0.5em minus
  0.4em\relax IEEE, 2012, pp. 2731--2735.

\bibitem{maatouk2020optimality}
A.~Maatouk, S.~Kriouile, M.~Assaad, and A.~Ephremides, ``On the optimality of
  the whittle’s index policy for minimizing the age of information,''
  \emph{IEEE Transactions on Wireless Communications}, 2020.

\bibitem{hsu2019scheduling}
Y.-P. Hsu, E.~Modiano, and L.~Duan, ``Scheduling algorithms for minimizing age
  of information in wireless broadcast networks with random arrivals,''
  \emph{IEEE Transactions on Mobile Computing}, 2019.

\bibitem{kadota2018scheduling}
I.~Kadota, A.~Sinha, E.~Uysal-Biyikoglu, R.~Singh, and E.~Modiano, ``Scheduling
  policies for minimizing age of information in broadcast wireless networks,''
  \emph{IEEE/ACM Transactions on Networking}, vol.~26, no.~6, pp. 2637--2650,
  2018.

\bibitem{bedewy2019age}
A.~M. Bedewy, Y.~Sun, and N.~B. Shroff, ``The age of information in multihop
  networks,'' \emph{IEEE/ACM Transactions on Networking}, vol.~27, no.~3, pp.
  1248--1257, 2019.

\bibitem{maatouk2020status}
A.~Maatouk, Y.~Sun, A.~Ephremides, and M.~Assaad, ``Status updates with
  priorities: Lexicographic optimality,'' in \emph{2020 18th International
  Symposium on Modeling and Optimization in Mobile, Ad Hoc, and Wireless
  Networks (WiOPT)}.\hskip 1em plus 0.5em minus 0.4em\relax IEEE, 2020, pp.
  1--8.

\bibitem{sun2018age}
Y.~Sun, E.~Uysal-Biyikoglu, and S.~Kompella, ``Age-optimal updates of multiple
  information flows,'' in \emph{IEEE INFOCOM 2018-IEEE Conference on Computer
  Communications Workshops (INFOCOM WKSHPS)}.\hskip 1em plus 0.5em minus
  0.4em\relax IEEE, 2018, pp. 136--141.

\bibitem{SunElif2017}
Y.Sun, E.Uysal-Biyikoglu, R.D.Yates, C.E.Koksal, and N.B.Shroff, ``Update or
  wait: How to keep your data fresh,'' \emph{IEEE Transactions on Information
  Theory}, vol.~63, no.~11, pp. 7492--7508, 2017.

\bibitem{Oguz2022}
T.~K. Oguz, E.~T. Ceran, E.~Uysal, and T.~Girici, ``Implementation and
  evaluation of age-aware downlink scheduling policies in push-based and
  pull-based communication,'' \emph{IEEE Transactions on Communications},
  vol.~24, no.~5, p. 673, 2022.

\bibitem{zhong2018}
J.~Zhong, R.~D. Yates, and E.~Soljanin, ``Two freshness metrics for local cache
  refresh,'' in \emph{IEEE International Symposium on Information Theory
  (ISIT)}.\hskip 1em plus 0.5em minus 0.4em\relax IEEE, 2018, pp. 1924--1928.

\bibitem{Ayan2019}
O.~Ayan, M.~Vilgelm, M.~Klugel, S.~Hirche, and W.~Kellerer,
  ``Age-of-information vs. value-of-information scheduling for cellular
  networked control systems,'' in \emph{in Proceedings of the 10th ACM/IEEE
  ICCPS}, 2019.

\bibitem{Escroc2019}
G.~Stamatakis and A.~T. N.~Pappas, ``Control of status updates for energy
  harvesting devices that monitor processes with alarms,'' in \emph{In Proc. of
  IEEE Globecom Workshops (GC Wkshps)}, 2019.

\bibitem{Chiariotti2022}
F.~C. et~al., ``Query age of information: Freshness in pull-based
  communication,'' \emph{IEEE Transactions on Communications}, vol.~70, no.~3,
  pp. 1606--1622, 2022.

\bibitem{ElifSemantic}
E.~Uysal and et~al, ``Semantic communications in networked systems: A data
  significance perspective,'' \emph{IEEE Network}, vol.~36, no.~4, pp.
  233--240, 2022.

\bibitem{maatouk2020age}
A.~Maatouk, S.~Kriouile, M.~Assaad, and A.~Ephremides, ``The age of incorrect
  information: A new performance metric for status updates,'' \emph{IEEE/ACM
  Transactions on Networking}, vol.~28, no.~5, pp. 2215--2228, 2020.

\bibitem{Ayik2023}
M.~Ayik, E.~T. Ceran, and E.~Uysal, ``Optimization of aoii and qaoii in
  multi-user links,'' in \emph{available on arviv, arXiv:2305.00191}, 2023.

\bibitem{maatouk2020agenew}
A.~Maatouk, M.~Assaad, and A.~Ephremides, ``The age of incorrect information:
  an enabler of semantics-empowered communication,'' \emph{arXiv preprint
  arXiv:2012.13214}, 2020.

\bibitem{kam2020age}
C.~Kam, S.~Kompella, and A.~Ephremides, ``Age of incorrect information for
  remote estimation of a binary markov source,'' in \emph{IEEE INFOCOM
  2020-IEEE Conference on Computer Communications Workshops (INFOCOM
  WKSHPS)}.\hskip 1em plus 0.5em minus 0.4em\relax IEEE, 2020, pp. 1--6.

\bibitem{kriouile2021minimizing}
S.~Kriouile and M.~Assaad, ``Minimizing the age of incorrect information for
  real-time tracking of markov remote sources,'' in \emph{2021 IEEE
  International Symposium on Information Theory (ISIT)}, 2021, pp. 2978--2983.

\bibitem{chen2021minimizing}
Y.~Chen and A.~Ephremides, ``Minimizing age of incorrect information for
  unreliable channel with power constraint,'' \emph{arXiv preprint
  arXiv:2101.08908}, 2021.

\bibitem{ChenAoII2021}
------, ``Scheduling to minimize age of incorrect information with imperfect
  channel state information,'' \emph{Entropy}, vol.~23, no.~12, 2021.

\bibitem{Maialen18}
M.~Larranaga, M.~Assaad, A.~Destounis, and G.~Paschos, ``Asymptotically optimal
  pilot allocation over markovian fading channels,'' \emph{IEEE Transactions on
  Information Theory}, vol.~64, no.~7, pp. 5395--5418, 2018.

\bibitem{tripathi2019whittle}
V.~Tripathi and E.~Modiano, ``A whittle index approach to minimizing functions
  of age of information,'' in \emph{2019 57th Annual Allerton Conference on
  Communication, Control, and Computing (Allerton)}.\hskip 1em plus 0.5em minus
  0.4em\relax IEEE, 2019, pp. 1160--1167.

\bibitem{sombabu2020whittle}
B.~Sombabu, A.~Mate, D.~Manjunath, and S.~Moharir, ``Whittle index for
  aoi-aware scheduling,'' in \emph{2020 International Conference on
  COMmunication Systems \& NETworkS (COMSNETS)}.\hskip 1em plus 0.5em minus
  0.4em\relax IEEE, 2020, pp. 630--633.

\bibitem{SaadAoI2022}
S.~Kriouile, M.~Assaad, and A.~Maatouk, ``On the global optimality of whittle's
  index policy for minimizing the age of information,'' \emph{IEEE Transactions
  on Information Theory}, vol.~68, no.~1, pp. 572--600, 2022.

\bibitem{papadimitriou1999complexity}
C.~H. Papadimitriou and J.~N. Tsitsiklis, ``The complexity of optimal queuing
  network control,'' \emph{Mathematics of Operations Research}, vol.~24, no.~2,
  pp. 293--305, 1999.

\bibitem{weber1990index}
R.~R. Weber and G.~Weiss, ``On an index policy for restless bandits,''
  \emph{Journal of Applied Probability}, vol.~27, no.~3, pp. 637--648, 1990.

\bibitem{kriouile:hal-03437753}
\BIBentryALTinterwordspacing
S.~Kriouile, M.~Assaad, and M.~Larranaga, ``{Asymptotically Optimal Delay-aware
  Scheduling in Queueing Systems},'' \emph{{Journal of Communications and
  Networks}}, 2021. [Online]. Available:
  \url{https://hal.archives-ouvertes.fr/hal-03437753}
\BIBentrySTDinterwordspacing

\end{thebibliography}

\begin{appendices}
\section{Proof of Lemma \ref{lem:random_variable}}\label{app:lem:random_variable}
We have for $g_i(t) \leq u \leq t$, $A_i(u)=X_i(u)-\hat{X}_i(u)$. By definition of $g_i(t)$, $\hat{X}_i(u)=X_i(g_i(t))$. Then $\hat{X}_i(u)$ is a fixed constant. Whereas $X_i(t)$ is unknown by the monitor. Accordingly, it is viewed as a random variable by the monitor. Baring in mind the system dynamics showcased in Section \ref{sec:Syst_mod}, $X_i(u)-\hat{X}_i(u)$ follows a binomial distribution with parameters $u-g_i(t)$ and $p_i$. Hence, the probability that $A_i(u)=u-g_i(t)-k$ is $(p_i)^{u-g_i(t)-k}.\binom{u-g_i(t)}{k}(1-p_i)^k$. That concludes the proof.

\section{Proof of Lemma \ref{lem:CME_expression}}\label{app:lem:CME_expression}
\begin{align}
&n_i(t)=\sum_{u=g_i(t)}^t d_i\mathbb{E}[A_i(u)] \nonumber \\
&=\sum_{u=g_i(t)}^t \sum_{k=0}^{u-g_i(t)} d_i. (u-g_i(t)-k) (p_i)^{u-g_i(t)-k} \nonumber\\
& \ \ \ \ \ \ \ \ \ \times \binom{u-g_i(t)}{k}(1-p_i)^k \nonumber \\
&=\sum_{u=g_i(t)}^t \sum_{k=0}^{u-g_i(t)} d_i k (p_i)^{k}.\binom{u-g_i(t)}{k}(1-p_i)^{u-g_i(t)-k} \nonumber \\
\end{align}
As we have, for all integer $n$ and $0\leq p \leq 1$, $\sum_{k=0}^n k.p^k.
(1-p)^{n-k}.\binom{n}{k}=np$, then:
\begin{align}
n_i(t)&=\sum_{u=g_i(t)}^t d_i p_i(u-g_i(t)) \nonumber \\
&=d_i p_i\frac{(t-g_i(t)+1)(t-g_i(t))}{2} 
\end{align}

\section{Proof of theorem \ref{theo:threshold_policy}}\label{app:theo:threshold_policy}

We provide first an useful lemma. 

\begin{lemma}\label{lem:MAoII_increasing}
$b^{j}$ is increasing with $j$
\end{lemma}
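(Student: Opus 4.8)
The plan is to prove the claim by a direct computation of the forward difference, using the closed-form expression $b^{j}=dp\,\frac{j(j+1)}{2}$ that was already established for the state space $A$. Since monotonicity of a sequence indexed by integers $j\geq 0$ is equivalent to the positivity of every consecutive difference $b^{j+1}-b^{j}$, the whole argument reduces to evaluating that difference and checking its sign.

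First I would substitute the closed form and simplify:
\begin{align}
b^{j+1}-b^{j}&=dp\,\frac{(j+1)(j+2)}{2}-dp\,\frac{j(j+1)}{2}\nonumber\\
&=dp\,\frac{(j+1)\big[(j+2)-j\big]}{2}=dp(j+1).\nonumber
\end{align}
Because $d$ (the distance between consecutive states) and $p$ (the transition probability of the source) are strictly positive system parameters, and $j+1\geq 1>0$ for every $j\geq 0$, the difference $dp(j+1)$ is strictly positive. Hence $b^{j+1}>b^{j}$ for all $j$, which is exactly the assertion that $b^{j}$ is increasing in $j$.

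There is essentially no obstacle here: the result is an immediate consequence of the quadratic growth of $\frac{j(j+1)}{2}$. The one point worth flagging is the reliance on $p>0$ and $d>0$; if $p=0$ the source never changes state and the sequence collapses to the constant $0$, so strict monotonicity uses the modelling assumption that the source genuinely evolves. I would state this positivity explicitly so that the lemma can be invoked cleanly in the proof of Theorem~\ref{theo:threshold_policy}, where the increasing ordering of the states is what lets the threshold structure be phrased in terms of $b^{j}<b^{n}$ versus $b^{j}\geq b^{n}$.
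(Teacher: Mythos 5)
Your proof is correct and follows essentially the same route as the paper, which simply observes that monotonicity is immediate from the closed form $b^{j}=dp\,\frac{j(j+1)}{2}$; you merely make the forward difference $b^{j+1}-b^{j}=dp(j+1)>0$ explicit. Your remark that strict positivity of $d$ and $p$ is needed is a fair (if minor) point the paper leaves implicit.
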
 

\begin{IEEEproof}
From the expression of $b^j$, it is clear that this later is increasing with $j$. 
\end{IEEEproof}

Based on this lemma, we prove the following lemma.
\begin{lemma}\label{lem:V_increasing}
$V(.)$ is increasing with $b^j$.
\end{lemma}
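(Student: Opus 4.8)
The plan is to prove the monotonicity first at the level of the relative value iteration iterates and then pass to the limit. Since by Lemma \ref{lem:MAoII_increasing} the states $b^j$ are ordered increasingly in $j$, saying that $V(\cdot)$ is increasing with $b^j$ amounts to showing that $V(b^j)$ is nondecreasing in $j$. I would use the standard fact that the iterates $V_t(\cdot)$ generated by the RVIA \eqref{eq:bellman_equation_time_t}, initialized at $V_0(\cdot)\equiv 0$, converge pointwise to the differential cost-to-go function $V(\cdot)$. Because a pointwise limit of nondecreasing functions is nondecreasing, it suffices to establish that each iterate $V_t(\cdot)$ is nondecreasing in $j$, which I would do by induction on $t$.

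The base case is immediate: $V_0(b^j)=0$ is constant, hence nondecreasing. For the inductive step, assume $V_t(b^j)$ is nondecreasing in $j$. I would write the two terms in the minimization of \eqref{eq:bellman_equation_time_t} as $Q_t^0(b^j)=b^j+V_t(b^{j+1})$ (passive action) and $Q_t^1(b^j)=b^j+W+\rho V_t(b^0)+(1-\rho)V_t(b^{j+1})$ (active action), and argue each is nondecreasing in $j$ separately. For $Q_t^0$, the term $b^j$ is increasing by Lemma \ref{lem:MAoII_increasing}, and $V_t(b^{j+1})$ is nondecreasing by the induction hypothesis since $j\mapsto j+1$ preserves order; a sum of nondecreasing functions is nondecreasing. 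For $Q_t^1$, the constants $W$ and $\rho V_t(b^0)$ do not affect monotonicity, $b^j$ is increasing, and $(1-\rho)V_t(b^{j+1})$ is nondecreasing because $1-\rho\ge 0$ and $V_t(b^{j+1})$ is nondecreasing.

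Finally, $V_{t+1}(b^j)=\min\{Q_t^0(b^j),Q_t^1(b^j)\}$, and the pointwise minimum of two nondecreasing functions is again nondecreasing; this closes the induction, and letting $t\to\infty$ transfers the property to $V(\cdot)$, giving the claim. The argument is a routine propagation of monotonicity through value iteration, so I do not anticipate a genuine obstacle; the only points requiring care are the sign condition $1-\rho\ge 0$, which is what lets the active-action term inherit the monotonicity of $V_t$, and the invocation of the convergence $V_t\to V$ to carry the property from the iterates to $V$ itself.
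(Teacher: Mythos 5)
Your proof is correct and follows essentially the same route as the paper's: induction on the RVIA iterates, showing each of the two action-value terms is nondecreasing using the monotonicity of $b^j$ and the induction hypothesis (with the sign condition $1-\rho\geq 0$), then taking the pointwise minimum and passing to the limit $V_t\to V$. No substantive difference from the paper's argument.
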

\begin{IEEEproof}
We prove the present lemma by induction using the Relative value iteration equation \eqref{eq:bellman_equation_time_t}. In fact, we show that $V_t(\cdot)$ is increasing and we conclude for $V(\cdot)$.\\
As $V_0(.)=0$, then the property holds for $t=0$.
If $V_t(.)$ is increasing with $b$, we show that for $b^{j} \leq b^{i}$, $V_{t+1}^0(b^{j}) \leq V_{t+1}^0(b^{i})$ and $V_{t+1}^1(b^{j}) \leq V_{t+1}^1(b^{i})$ where for each $k \in \mathbf{N}$:
\begin{align}
V_{t+1}^0(b^{k})&=b^{k}+V_t(b^{k+1}) \\
V_{t+1}^1(b^{k})&=b^{k}+W+\rho V_t(b^{0})+(1-\rho)V_t(b^{k+1}) 
\end{align}
We have that:
\begin{equation}
V_{t+1}^0(b^{j}) - V_{t+1}^0(b^{i})=b^{j}-b^{i}+(V_{t}(b^{j+1}) - V_{t}(b^{i+1}))
\end{equation}
According to Lemma \ref{lem:MAoII_increasing}, given that $b^{j} \leq b^{i}$, then $j \leq i$. That means $b^{j+1} \leq b^{i+1}$. Therefore, since $V_t(.)$ is increasing with $b^j$, we have that:
$V_{t+1}^0(b^{j}) - V_{t+1}^0(b^{i}) \leq 0$. \\
As consequence, $V_{t+1}^0(\cdot)$ is increasing with $b^{j}$. 

In the same way, we have:
$$V_{t+1}^1(b^{j}) - V_{t+1}^1(b^{i})=b^{j}-b^{i}+(1-\rho)(V_{t}(b^{j+1}) - V_{t}(b^{i+1}))$$
Hence: 
\begin{equation}
V_{t+1}^1(b^{j}) - V_{t+1}^1(b^{i}) \leq 0
\end{equation}
As consequence, $V_{t+1}^1(\cdot)$ is increasing with $b^{j}$.\\
Since $V_{t+1}(.)=\min\{V^0_{t+1}(\cdot),V^1_{t+1}(\cdot)\}$, then $V_{t+1}(.)$ is increasing with $b^j$. Accordingly, we demonstrate by induction that $V_t(.)$ is increasing for all $t$. Knowing that $\underset{t \rightarrow +\infty}{\text{lim}} V_t(b^j)=V(b^j)$, $V(.)$ must be also increasing with $b^j$.
\end{IEEEproof}

We define:
\begin{equation}
\Delta V(b^{j})=V^1(b^j)-V^0(b^j)
\end{equation}
where $\underset{t \rightarrow +\infty}{\text{lim}} V_t^0(b^j)=V^0(b^j)$ and $\underset{t \rightarrow +\infty}{\text{lim}} V_t^1(b^j)=V^1(b^j)$.\\
Subsequently, $\Delta V(b^{j})$ equals to:
\begin{equation}
\Delta V(b^{j})=\rho [\frac{W}{\rho }+V(b^0)-V(b^{j+1})]
\end{equation}
According to Lemma \ref{lem:V_increasing}, $V(.)$ is increasing with $b^{j+1}$. Therefore, $\Delta V(b^{j})$ is decreasing with $b^j$. Hence, there exists $b^{n}$ such that for all $b^j \leq b^{n}$, $\Delta V(b^{j})\geq 0$, and for all $b^j > b^{n}$, $\Delta V(b^{j}) <0$. Given that the optimal action for state $b^j$ is the one that minimizes $\min\{V^0(\cdot),V^1(\cdot)\}$, then for all $b^j \leq b^{n}$, the optimal decision is to stay idle since $\min\{V^0(b^j),V^1(b^j)\}=V^0(b^j)$, and for all $b^j > b^{n}$, the optimal decision is to transmit since $\min\{V^0(b^j),V^1(b^j)\}=V^1(b^j)$. Specifically, as $b^{j}$ is increasing with $j$, there exists $n$ such that for all $j < n$, the optimal action is passive action, and for all $j \geq n$, the optimal action is the active one. 

\section{Proof of Proposition \ref{prop:stationary_distribution}}\label{app:prop:stationary_distribution}
In order to demonstrate this proposition, we need to resolve the full balance equation under threshold policy $n$ at each state $b^{j}$:
\begin{equation}
u^n(b^{j})=\sum_{i=0}^{+\infty} pt^n(i \rightarrow j)u^n(b^{i})
\end{equation}
where $pt^n(i \rightarrow j)$ denotes the transitioning probability from the state $b^{i}$ to the state $b^{j}$ under threshold policy $n$. 
After some computations, we obtain the desired result.  

\section{Proof of Proposition \ref{prop:mean_age_expression}}\label{app:prop:mean_age_expression}
Exploiting the results of Proposition \ref{prop:stationary_distribution} and by definition of $\overline{s^{n}}$ given in \eqref{eq:average_age}, we have that: $\overline{s^{n}}=\sum_{j=0}^{+\infty} b^j u^n(b^j)$.

Therefore, using the expression of $b^{j}$ for $j \geq 0$, we have that:
\begin{align}
\overline{s^{n}}&=dp\frac{\rho}{n\rho+1} [\sum_{j=0}^n \frac{j(j+1)}{2} +\sum_{j=n+1}^{+\infty} \frac{j(j+1)}{2} (1-\rho)^{j-n}]\nonumber \\
&=dp \frac{\rho}{n\rho+1}[\sum_{j=0}^n \frac{j^2}{2}+\sum_{j=0}^n \frac{j}{2}+\sum_{j=n+1}^{+\infty} j(j+1)\frac{(1-\rho)^{j-n}}{2}]
\end{align}\\

We have also: \\ $\sum_{j=0}^n \frac{j^2}{2}=\frac{1}{2}(\frac{n^3}{3}+\frac{n^2}{2}+\frac{n}{6})$ and $\sum_{j=0}^n \frac{j}{2}=\frac{1}{2}\frac{n(n+1)}{2}$. 
As for the last term, we have that:
\begin{align}
&\sum_{j=n+1}^{+\infty} j(j+1)\frac{(1-\rho)^{j-n}}{2}
=\sum_{j=1}^{+\infty} (j+n)(j+1+n)\frac{(1-\rho)^{j}}{2}\nonumber \\
&=\sum_{j=1}^{+\infty} j(j+1)\frac{(1-\rho)^{j}}{2}+\frac{n^2}{2} \frac{1-\rho}{\rho}+\frac{n}{2}\frac{1-\rho^2}{\rho^2} \nonumber \\ &+\frac{n}{2}(1-\rho)\frac{1}{\rho^2}
\end{align}
We have that:
\begin{align}
\sum_{j=1}^{+\infty} j(j+1)(1-\rho)^{j}&=\sum_{j=0}^{+\infty} j(j+1)(1-\rho)^{j}\nonumber \\
&=(1-\rho)\sum_{j=0}^{+\infty}\frac{\partial^2(1-\rho)^{j+1}}{\partial (1-\rho)^2}
\end{align}
Leveraging that, and given that $\sum_{j=0}^{+\infty}(1-\rho)^{j+1}=\frac{1-\rho}{\rho}$, then by deriving twice this term with respect to $1-\rho$, we get $\frac{1}{2}\sum_{j=1}^{+\infty} j(j+1)(1-\rho)^{j}=\frac{1-\rho}{\rho^3}$.
Adding all terms together, we get:
\begin{align}
\overline{s^{n}}=&dp\frac{\rho }{n\rho  +1}[\frac{1}{6}n^3+\frac{1}{2\rho}n^2+\frac{6-\rho^2-3\rho}{6\rho^2}n+\frac{1-\rho}{\rho^3}]
\end{align}
As consequence, we get our the desired results.
\section{Proof of Theorem \ref{theo:Whittle_index_expressions}}\label{app:theo:Whittle_index_expressions}
The investigations regarding the expression of the Whittle's index for Age of Information metric have been already done in \cite{maatouk2020optimality}. 
To that extent, in this proof, we do the analysis only for the adapted AoII metric considered throughout our paper.
We first define the sequence $W_i(b_i^{n})$
as the intersection points between $\overline{b_i^n}+W\overline{a_i^n}$ and $\overline{b_i^{n+1}}+W\overline{a_i^{n+1}}$. Explicitly:
\begin{equation}
W_i(b_i^{n})=\frac{\overline{b_i^{n+1}}-\overline{b_i^{n}}}{\overline{a_i^n}-\overline{a_i^{n+1}}}
\end{equation}
According to the results in [32, Corollary 2.1], if $W_i(b_i^{n})$ 
is increasing with $b_i^{n}$, then the Whittle's
index for any state $b_i^{n}$ is nothing but $W_i(b_i^{n})$. To that extent, we prove that $W_i(b_i^{n})$ is increasing with $b_i^{n}$. However, since $b_i^{n}$ is increasing with $n$, it is sufficient to show that $W_i(b_i^{n})$ is increasing with $n$ to establish the desired result.  

Therefore, we first seek a closed-form expression of the intersection point $W_i(b_i^{n})$, we obtain:
\begin{align}
W_i(b_i^{n})=d_ip_i [\frac{1}{3}\rho_i n^3+(1+\frac{\rho_i}{2})n^2
+(1+\frac{\rho_i}{6}+\frac{1}{\rho_i})n+\frac{1}{\rho_i}] 
\end{align}
From the expression above, it is clear that $W_i(b_i^n)$ is increasing with $b_i^n$. 
Therefore $W_i(b_i^n)$ is the Whittle's index of the state $b_i^n$.
That concludes the proof. 

\end{appendices}

\end{document}